\documentclass[conference]{IEEEtran}
\IEEEoverridecommandlockouts
\usepackage{cite}
\usepackage{amsmath,amssymb,amsfonts}
\usepackage{graphicx}
\usepackage{textcomp}
\usepackage{xcolor}
\def\BibTeX{{\rm B\kern-.05em{\sc i\kern-.025em b}\kern-.08em
    T\kern-.1667em\lower.7ex\hbox{E}\kern-.125emX}}

\usepackage{amsthm,mathtools}
\usepackage{algorithm}
\usepackage{algorithmic}
\usepackage{booktabs}
\usepackage{array}
\usepackage{url}
\usepackage{caption}

\newtheorem{theorem}{Theorem}
\newtheorem{observation}{Observation}
\newtheorem{proposition}{Proposition}
\newtheorem{fact}{Fact}
\newtheorem{corollary}{Corollary}
\newtheorem{remark}{Remark}
\newtheorem{example}{Example}

\newcommand{\tr}{\operatorname{tr}}
\newcommand{\lammax}{\lambda_{\max}}
\newcommand{\Hs}{\mathcal{H}}
\newcommand{\U}{\mathcal{U}}
\newcommand{\Sset}{\mathsf{S}}
\newcommand{\M}{\mathsf{M}}
\newcommand{\R}{\mathbb{R}}
\newcommand{\C}{\mathbb{C}}
\newcommand{\E}{\mathbb{E}}
\newcommand{\Var}{\operatorname{Var}}

\begin{document}

\title{Spectral Minimax Direct Fidelity Estimation\\for Generic Target States
}

\author{\IEEEauthorblockN{Hyunho Cha and Jungwoo Lee}
\IEEEauthorblockA{\textit{NextQuantum and Department of Electrical and Computer Engineering} \\
\textit{Seoul National University}\\
Seoul, Republic of Korea \\
\texttt{\{ovalavo, junglee\}@snu.ac.kr}}
}

\maketitle


\begin{abstract}
Direct fidelity estimation benefits from tailoring measurements to a fixed target, but the operator-aware shadow importance sampling (OASIS) method optimizes an outcome-wise linear-program surrogate rather than the exact worst-case variance over physical states. We propose an exact spectral replacement for arbitrary target states under the same non-adaptive single-copy measurement model. Specifically, we characterize unbiased linear estimators by a single operator identity, determine the state-wise optimal sampling law for fixed reconstruction coefficients, and convert the exact minimax problem into a semidefinite program. The resulting offline design and online estimator are presented as an algorithm and implemented with local Pauli measurements. Numerical simulations under depolarizing noise demonstrate that our exact spectral optimization outperforms the OASIS surrogate in terms of estimation variance.
\end{abstract}

\begin{IEEEkeywords}
direct fidelity estimation, classical shadows, semidefinite programming, randomized measurements
\end{IEEEkeywords}

\section{Introduction}
Estimating the fidelity between an unknown quantum state and a fixed pure target is a central primitive in verification, benchmarking, calibration, and experimental validation \cite{huang2025certifying}. The direct-fidelity-estimation framework showed that one can estimate a target overlap far more economically than by full state tomography when the target structure is exploited explicitly \cite{flammia2011direct}. The classical-shadows protocol then established a scalable operator-prediction viewpoint in which randomized measurements are followed by target-dependent classical post-processing \cite{huang2020predicting, huang2022learning}. These developments motivated a sequence of target-aware fidelity estimators that optimize not only the post-processing rule but also the measurement distribution itself \cite{barbera2025sampling, cha2025efficient, cha2026operator}.

A recent example is \emph{operator-aware shadow importance sampling} (OASIS) introduced in \cite{cha2026operator}. The key idea of OASIS is to expand the target projector in an informationally overcomplete measurement family rather than in a minimally complete basis \cite{acharya2021shadow, innocenti2023shadow, cha2026operator}. Once the target is represented in an overcomplete family, the reconstruction coefficients are no longer unique, and this nonuniqueness can be exploited to reduce estimator variance \cite{caprotti2024optimizing}. For arbitrary targets, OASIS solves a linear program whose objective penalizes, setting by setting, the largest outcome coefficient magnitude. This construction is experimentally natural and already improves the grouped direct-fidelity-estimation baseline for random targets \cite{barbera2025sampling,cha2026operator}. However, the true adversary is not an isolated measurement outcome but a quantum state that induces an entire outcome distribution.

That observation changes the optimization problem. If a measurement setting contains one large coefficient attached to an outcome that adversarial states can rarely realize, then an outcome-wise supremum is too pessimistic. The statistically relevant quantity is instead the exact state-dependent second moment. Optimizing against states rather than against worst isolated outcomes leads to a genuinely different objective, namely the exact worst-case one-shot variance over the same estimator class used by OASIS.

We restrict attention to generic target states. As in the arbitrary-target OASIS-GT regime of \cite{cha2026operator}, our formulation is mainly useful for generic or randomly generated targets rather than highly structured families such as GHZ or W, because specialized grouping methods can exploit sparse Pauli support whereas an arbitrary-target overcomplete optimization may still assign weight to redundant settings.

The comparison with OASIS is clean in two senses. First, both methods use the same experimentally relevant local-Pauli measurement family. Second, both belong to the same class of unbiased linear estimators built from non-adaptive single-copy measurements. The improvement comes from replacing the offline surrogate objective by the exact minimax objective.

The contributions are as follows. First, we define the estimator class carefully and observe that unbiasedness is equivalent to one linear operator identity. Second, we derive the exact one-shot variance and the state-wise optimal measurement law for fixed reconstruction coefficients. This step explains precisely why the outcome-wise OASIS objective is only an upper bound. Third, we prove that the OASIS linear program is a relaxation and can be strictly loose already in the simplest one-qubit setting. Fourth, we prove an exact spectral minimax identity for the worst-case variance and convert it into an exact semidefinite program (SDP) by a perspective lift. Finally, we present the resulting algorithm in offline-online form and state the corresponding worst-case mean-squared-error bound for the multi-shot estimator.

\section{Measurement model and unbiased estimators}
Let $\Hs=(\C^2)^{\otimes n}$ be the $n$-qubit Hilbert space and let $d=2^n$. Fix a unit vector $|\psi\rangle\in\Hs$ and denote the target rank-one projector by
$
O=|\psi\rangle\langle\psi|.
$
For any density operator $\rho\in\Sset(\Hs):=\{\rho\succeq 0:\tr(\rho)=1\}$, the target fidelity is
$
F(\rho)=\tr(\rho O).
$
We write $A\succeq 0$ for positive semidefiniteness, $A\preceq B$ for $B-A\succeq 0$, and $\lammax(A)$ for the largest eigenvalue of a Hermitian operator $A$.

A measurement setting family is a finite collection $\{P_{u,b}\}$ indexed by settings $u\in\U$ and outcomes $b\in B_u$ such that, for each fixed $u$, the operators $P_{u,b}\succeq 0$ satisfy $\sum_{b\in B_u}P_{u,b}=I$. Thus each setting defines an ordinary positive operator-valued measure (POVM). In the experiments of this paper we use the local Pauli family. For $u=(u_1,\ldots,u_n)\in\{X,Y,Z\}^n$, let $H$ be the Hadamard matrix, let $S=\operatorname{diag}(1,i)$, set $U_Z=I$, $U_X=H$, and $U_Y=HS^\dagger$, define $U_u=U_{u_1}\otimes\cdots\otimes U_{u_n}$, and for $b\in\{0,1\}^n$ set
\begin{equation}
P_{u,b}=U_u^\dagger |b\rangle\langle b|U_u.
\label{eq:pauli-povm}
\end{equation}
Operationally, choosing setting $u$ means rotating into the local Pauli eigenbasis indicated by $u$ and then measuring in the computational basis.

The role of overcompleteness is important, so we state it explicitly.

\begin{fact}
\label{fact:ioc}
Let $p$ be any probability distribution on $\U=\{X,Y,Z\}^n$ with $p(u)>0$ for every $u$. Then the weighted effects
$
E_{u,b}=p(u)P_{u,b}
$
form an informationally overcomplete POVM on $\Hs$ \cite{d2007optimal}.
\end{fact}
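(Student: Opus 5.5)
We need to show three things about the weighted effects $E_{u,b}=p(u)P_{u,b}$: that they form a POVM, that they are informationally complete (their span is all Hermitian operators), and that they are overcomplete (linearly dependent, so reconstruction coefficients are non-unique).

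\emph{POVM property.} Each $E_{u,b}$ is positive semidefinite because $p(u)>0$ and $P_{u,b}\succeq 0$. Using $\sum_b P_{u,b}=I$ and $\sum_u p(u)=1$ we get
\[
\sum_{u,b}E_{u,b}=\sum_u p(u)\,I=I .
\]

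\emph{Informational completeness.} Since every $p(u)$ is nonzero, the real span of $\{E_{u,b}\}$ equals the real span of $\{P_{u,b}\}$, so it suffices to show that the $P_{u,b}$ span all Hermitian operators on $\Hs$.
\begin{itemize}
\item \emph{Single-qubit projectors.} The local projectors are $\tfrac12(I\pm\sigma_{u_j})$. Taking their sum and difference recovers both $I$ and $\sigma_{u_j}$.
\item \emph{Product expansion.} Expanding $\bigotimes_j \tfrac12(I+(-1)^{b_j}\sigma_{u_j})$ and forming signed sums over $b$ (a Fourier/Walsh transform on $\{0,1\}^n$) isolates, for each subset $A\subseteq[n]$, the operator $\bigotimes_{j\in A}\sigma_{u_j}\otimes I_{[n]\setminus A}$.
\item \emph{All Paulis are reached.} Every Pauli string $\sigma_{s}$ with $s\in\{I,X,Y,Z\}^n$ arises this way: take $A=\operatorname{supp}(s)$ and choose any $u$ with $u_j=s_j$ for $j\in A$.
\end{itemize}
The $4^n$ Pauli strings form a basis of the Hermitian operators, so the span is complete. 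As a consequence, $\rho$ is determined by the outcome probabilities $\tr(\rho E_{u,b})$.

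\emph{Overcompleteness.} There are $|\U|\cdot 2^n=6^n$ effects, while the space of Hermitian operators has dimension $4^n$. Because $6^n>4^n$ for $n\ge1$, the effects are linearly dependent. A concrete dependence is
\[
\sum_b E_{u,b}/p(u)=I\quad\text{for every }u .
\]
For any two distinct settings this gives a nontrivial linear relation, which already exhibits the non-uniqueness of reconstruction coefficients that the later optimization exploits.

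The only step requiring care is the Walsh-transform bookkeeping in the product expansion. It is routine, and one may instead cite \cite{d2007optimal}.
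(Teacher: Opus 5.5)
Your proof is correct and takes the same route as the paper's. It checks positivity and completeness of the POVM, then spans the Hermitian operators by recovering $I$ and the Paulis from the single-qubit eigenprojectors and extending to $n$ qubits, where your Walsh-transform step is an explicit version of the paper's ``tensor products of spanning sets span the tensor-product space''. It then gets linear dependence from the count $6^n>4^n$, and your explicit relation $\sum_b E_{u,b}/p(u)-\sum_b E_{u',b}/p(u')=0$ for $u\neq u'$ is a concrete witness the paper does not spell out.
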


\begin{proof}
Each $E_{u,b}$ is positive semidefinite. Also,
\begin{equation}
\sum_{u\in\U}\sum_{b\in\{0,1\}^n}E_{u,b}=
\sum_{u\in\U}p(u)\sum_b P_{u,b}=
\sum_{u\in\U}p(u)I=I,
\end{equation}
so the family is a POVM. For one qubit, the six Pauli eigenprojectors are $(I\pm X)/2$, $(I\pm Y)/2$, and $(I\pm Z)/2$, and these span the four-dimensional real vector space of Hermitian $2\times2$ matrices because one can recover $I$, $X$, $Y$, and $Z$ from linear combinations of them. Tensor products of spanning sets span tensor-product spaces, so the tensor products of these six one-qubit projectors span the full real Hermitian operator space on $(\C^2)^{\otimes n}$, whose dimension is $4^n$. Since each $E_{u,b}$ is a positive scalar multiple of $P_{u,b}$, the weighted effects span the same space. Finally, there are $6^n$ effects, and $6^n>4^n$ for all $n\ge 1$, so the family is linearly dependent over $\R$. Therefore it is informationally overcomplete.
\end{proof}

A one-shot estimator is specified by a sampling law $q$ on $\U$ and real outputs $g_u(b)$ for $b\in B_u$. A single shot samples $u\sim q$, measures $\rho$ with the POVM $\{P_{u,b}\}_{b\in B_u}$, observes $b$, and returns the scalar output $X=g_u(b)$. It is convenient to absorb the measurement law into the coefficient parametrization
\begin{equation}
\alpha_{u,b}=q(u)g_u(b).
\label{eq:alpha}
\end{equation}
If $q(u)=0$, then necessarily $\alpha_{u,b}=0$ for every $b\in B_u$, because a setting that is never sampled cannot carry nonzero estimator weight.

The following criterion characterizes the entire unbiased estimator class.

\begin{observation}
\label{obs:unbiased}
The one-shot estimator defined by $(q,g)$ is unbiased for the target $O$ if and only if
\begin{equation}
\sum_{u\in\U}\sum_{b\in B_u}\alpha_{u,b}P_{u,b}=O.
\label{eq:unbiased}
\end{equation}
\end{observation}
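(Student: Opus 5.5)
The plan is to compute the expectation of the one-shot output $X$ in closed form and then observe that unbiasedness, demanded for every state $\rho\in\Sset(\Hs)$, forces equality of operators. By the law of total expectation over the random setting and the Born rule,
\begin{equation}
\E_\rho[X]=\sum_{u\in\U}q(u)\sum_{b\in B_u}\tr(\rho P_{u,b})\,g_u(b)=\tr\Bigl(\rho\sum_{u,b}\alpha_{u,b}P_{u,b}\Bigr),
\end{equation}
where we use the parametrization \eqref{eq:alpha}. Settings with $q(u)=0$ contribute nothing on either side, consistent with the convention $\alpha_{u,b}=0$ there. Write $A=\sum_{u,b}\alpha_{u,b}P_{u,b}$; this is Hermitian because the $\alpha_{u,b}$ are real and the $P_{u,b}$ are positive semidefinite.

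Unbiasedness means $\E_\rho[X]=F(\rho)=\tr(\rho O)$ for every density operator $\rho$, which is equivalent to $\tr(\rho(A-O))=0$ for all $\rho\in\Sset(\Hs)$. The ``if'' direction is then immediate: if \eqref{eq:unbiased} holds, then $A=O$ and the two expectations coincide.

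The ``only if'' direction is the only substantive step. Set $D=A-O$, which is Hermitian. I would take $\rho=|\phi\rangle\langle\phi|$ for an arbitrary unit vector $|\phi\rangle$. This gives $\langle\phi|D|\phi\rangle=0$ for all $|\phi\rangle$, so every eigenvalue of $D$ vanishes, obtained by choosing $|\phi\rangle$ to be an eigenvector. Since $D$ is Hermitian, it is then zero, which yields \eqref{eq:unbiased}. An equivalent argument notes that density operators span the real space of Hermitian operators, so $D$ is orthogonal to that whole space in the Hilbert--Schmidt inner product. I expect no real obstacle. The only care needed is the bookkeeping for unsampled settings, and the fact that the result does not depend on the decomposition of $A$ because only the operator $A$ enters the expectation.
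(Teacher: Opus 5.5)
Your proposal is correct and follows the paper's proof: you compute $\E_\rho[X]=\tr\bigl(\rho\sum_{u,b}\alpha_{u,b}P_{u,b}\bigr)$ via the parametrization $\alpha_{u,b}=q(u)g_u(b)$, and then equate with $\tr(\rho O)$ for all states. The only addition is that you spell out the ``only if'' step (pure states and eigenvectors of the Hermitian difference force it to vanish), which the paper leaves implicit.
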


\begin{proof}
Using Eq.~\eqref{eq:alpha},
\begin{align}
\E_\rho[X]
&=\sum_{u\in\U}q(u)\sum_{b\in B_u}\tr(\rho P_{u,b})g_u(b)\notag\\
&=\sum_{u,b}\tr(\rho P_{u,b})\alpha_{u,b}
 =\tr\!\left(\rho\sum_{u,b}\alpha_{u,b}P_{u,b}\right).
\label{eq:unbiased-proof}
\end{align}
Hence $\E_\rho[X]=\tr(\rho O)$ for every $\rho\in\Sset(\Hs)$ if and only if Eq.~\eqref{eq:unbiased} holds.
\end{proof}

Equation \eqref{eq:unbiased} is the exact connection with OASIS. If OASIS chooses a default distribution $p$ and expands the target as $O=\sum_{u,b}\omega_{u,b}E_{u,b}$ with $E_{u,b}=p(u)P_{u,b}$, then the present coefficients are simply $\alpha_{u,b}=p(u)\omega_{u,b}$ \cite{cha2026operator}. Thus the spectral formulation does not enlarge the feasible estimator class. It changes only the offline objective used to choose the same type of unbiased coefficients.

For an admissible pair $(q,\alpha)$ satisfying Eq.~\eqref{eq:unbiased}, define the second-moment operator
\begin{equation}
\M_{q,\alpha}=\sum_{u\in\U}\sum_{b\in B_u}\frac{\alpha_{u,b}^2}{q(u)}P_{u,b},
\label{eq:M}
\end{equation}
where terms with $q(u)=0$ are omitted because they necessarily have $\alpha_{u,b}=0$.

\begin{observation}
\label{obs:variance}
For every $\rho\in\Sset(\Hs)$,
\begin{equation}
\Var_\rho(X)=\tr(\rho\M_{q,\alpha})-\tr(\rho O)^2.
\label{eq:varformula}
\end{equation}
\end{observation}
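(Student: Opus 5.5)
The plan is to compute the second moment of $X$ directly and then subtract the squared mean, which Observation~\ref{obs:unbiased} already identifies. Implicit in the statement is that $(q,\alpha)$ is admissible, so Eq.~\eqref{eq:unbiased} holds; I will use it only in the last step.

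\textbf{Second moment.} A single shot outputs $X=g_u(b)$ with probability $q(u)\tr(\rho P_{u,b})$. Hence
\begin{equation*}
\E_\rho[X^2]=\sum_{u:\,q(u)>0}q(u)\sum_{b\in B_u}\tr(\rho P_{u,b})\,g_u(b)^2 .
\end{equation*}
For every $u$ with $q(u)>0$, Eq.~\eqref{eq:alpha} gives $g_u(b)=\alpha_{u,b}/q(u)$. Substituting, each summand becomes $\tr(\rho P_{u,b})\,\alpha_{u,b}^2/q(u)$.

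\textbf{Unsampled settings.} Settings with $q(u)=0$ contribute nothing to the expectation, since they are never drawn. By the convention stated after Eq.~\eqref{eq:M}, they also contribute nothing to $\M_{q,\alpha}$: those terms are omitted, and in any case $\alpha_{u,b}=0$ there. This bookkeeping is the only point needing care; there is no real obstacle.

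\textbf{Assembling.} Linearity of the trace then gives
\begin{equation*}
\E_\rho[X^2]=\tr\Big(\rho\sum_{u,b}\tfrac{\alpha_{u,b}^2}{q(u)}P_{u,b}\Big)=\tr(\rho\M_{q,\alpha}).
\end{equation*}
By Observation~\ref{obs:unbiased}, Eq.~\eqref{eq:unbiased} yields $\E_\rho[X]=\tr(\rho O)$. The identity $\Var_\rho(X)=\E_\rho[X^2]-\E_\rho[X]^2$ then gives Eq.~\eqref{eq:varformula}.
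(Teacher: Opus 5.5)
Your proposal is correct and follows the paper's proof essentially verbatim: substitute $g_u(b)=\alpha_{u,b}/q(u)$ to get $\E_\rho[X^2]=\tr(\rho\M_{q,\alpha})$, then subtract the squared mean $\tr(\rho O)^2$ supplied by Observation~\ref{obs:unbiased}. Your explicit bookkeeping for settings with $q(u)=0$ is a small refinement that the paper leaves implicit.
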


\begin{proof}
On outcome $(u,b)$, the estimator outputs $X=\alpha_{u,b}/q(u)$. Therefore
\begin{align}
\E_\rho[X^2]
&=\sum_u q(u)\sum_b \tr(\rho P_{u,b})\left(\frac{\alpha_{u,b}}{q(u)}\right)^2\notag\\
&=\tr\!\left(\rho\sum_{u,b}\frac{\alpha_{u,b}^2}{q(u)}P_{u,b}\right)
 =\tr(\rho\M_{q,\alpha}).
\end{align}
Observation~\ref{obs:unbiased} gives $\E_\rho[X]=\tr(\rho O)$, and subtracting the square of the mean yields Eq.~\eqref{eq:varformula}.
\end{proof}

\section{From exact variance to exact design}
Observation~\ref{obs:variance} shows that the design problem is to minimize the worst-case value of the state-dependent variance $\Var_\rho(X)$ over $\rho\in\Sset(\Hs)$, subject to the unbiasedness identity. To understand why this differs from the OASIS surrogate, it is useful to optimize the sampling law for fixed reconstruction coefficients and a fixed state.

\begin{theorem}[State-wise optimal law for fixed coefficients]
\label{thm:fixedstate}
Fix coefficients $\alpha_{u,b}$ satisfying Eq.~\eqref{eq:unbiased} and fix a state $\rho\in\Sset(\Hs)$. Define
$
\mu_u(\rho)=\sum_{b\in B_u}\alpha_{u,b}^2\tr(\rho P_{u,b})\ge 0.
$
Then
\begin{equation}
\inf_{q}\sum_{u\in\U}\frac{\mu_u(\rho)}{q(u)}
=\left(\sum_{u\in\U}\sqrt{\mu_u(\rho)}\right)^2,
\label{eq:sqrt-rule-value}
\end{equation}
where the infimum is over all probability distributions $q$ on $\U$. If not all $\mu_u(\rho)$ vanish, equality is attained exactly by
\begin{equation}
q_\rho^{\star}(u)=\frac{\sqrt{\mu_u(\rho)}}{\sum_{v\in\U}\sqrt{\mu_v(\rho)}}
\qquad \text{whenever }\mu_u(\rho)>0.
\label{eq:sqrt-rule}
\end{equation}
\end{theorem}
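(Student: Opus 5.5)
The plan is to prove a matching lower bound and upper bound, the lower bound by Cauchy--Schwarz and the upper bound by exhibiting $q_\rho^\star$. Throughout, I use the convention already in force for Eq.~\eqref{eq:M}. A term with $\mu_u(\rho)=0$ contributes $0$ to the sum, even when $q(u)=0$. A term with $\mu_u(\rho)>0$ and $q(u)=0$ contributes $+\infty$. So the infimum may be restricted to laws $q$ with $q(u)>0$ on the support $T=\{u:\mu_u(\rho)>0\}$.

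First, I handle the degenerate case where every $\mu_u(\rho)$ vanishes. Then every term is zero for any $q$, and both sides of Eq.~\eqref{eq:sqrt-rule-value} equal $0$.

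Otherwise $T$ is nonempty. For the lower bound, take any admissible $q$ and write $\sqrt{\mu_u}=\sqrt{\mu_u/q(u)}\,\sqrt{q(u)}$ for $u\in T$. Applying Cauchy--Schwarz gives
\begin{equation*}
\Bigl(\sum_{u\in T}\sqrt{\mu_u}\Bigr)^2\le \Bigl(\sum_{u\in T}\frac{\mu_u}{q(u)}\Bigr)\Bigl(\sum_{u\in T}q(u)\Bigr)\le \sum_{u\in\U}\frac{\mu_u}{q(u)},
\end{equation*}
since $\sum_{u\in T}q(u)\le 1$ and the terms outside $T$ vanish. This shows the infimum is at least $(\sum_u\sqrt{\mu_u})^2$.

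For the upper bound, I plug in $q_\rho^\star$ from Eq.~\eqref{eq:sqrt-rule}, setting it to zero off $T$. Write $S=\sum_v\sqrt{\mu_v}>0$. Each term becomes $\mu_u\cdot S/\sqrt{\mu_u}=S\sqrt{\mu_u}$, and summing gives $S^2$. So the infimum is attained and Eq.~\eqref{eq:sqrt-rule-value} holds.

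For the claim that $q_\rho^\star$ is the unique optimizer ("exactly"), I examine when both inequalities are tight. The second inequality is tight only if $\sum_{u\in T}q(u)=1$, so $q$ vanishes off $T$. The Cauchy--Schwarz step is tight only if $\sqrt{\mu_u/q(u)}\propto\sqrt{q(u)}$ on $T$, which means $q(u)\propto\sqrt{\mu_u}$. Normalization then forces $q=q_\rho^\star$ on $T$.

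I expect the only delicate point to be the bookkeeping for zero entries: the $0/0$ convention, and the fact that optimality forces zero mass off $T$. The analytic core is a one-line Cauchy--Schwarz argument.
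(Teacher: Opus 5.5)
Your proposal is correct and follows the paper's proof. It uses the same Cauchy--Schwarz pairing $a_u=\sqrt{\mu_u(\rho)/q(u)}$, $b_u=\sqrt{q(u)}$, the same observation that $q(u)=0$ on the support makes the objective infinite, and the same proportionality condition for equality followed by normalization; your explicit handling of the all-zero case and the restriction to $T=\{u:\mu_u(\rho)>0\}$ (which avoids the $0/0$ terms the paper glosses over and shows any optimizer puts zero mass off $T$) tightens the argument without changing the route.
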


\begin{proof}
If $q(u)=0$ for some index with $\mu_u(\rho)>0$, then the objective is infinite, so an optimal law must satisfy $q(u)>0$ on the support of $\mu_u(\rho)$. Apply the Cauchy--Schwarz inequality to the vectors with components $a_u=\sqrt{\mu_u(\rho)/q(u)}$ and $b_u=\sqrt{q(u)}$. Since $\sum_u q(u)=1$,
\begin{align}
\left(\sum_u \sqrt{\mu_u(\rho)}\right)^2
&=\left(\sum_u a_ub_u\right)^2 \notag\\
&\le \left(\sum_u \frac{\mu_u(\rho)}{q(u)}\right)
\left(\sum_u q(u)\right) \notag\\
&=\sum_u \frac{\mu_u(\rho)}{q(u)}.
\end{align}
This proves the lower bound in Eq.~\eqref{eq:sqrt-rule-value}. Equality in Cauchy--Schwarz holds precisely when the vectors $(a_u)_u$ and $(b_u)_u$ are proportional on the support of $\mu_u(\rho)$, which means $q(u)\propto\sqrt{\mu_u(\rho)}$. Normalization gives Eq.~\eqref{eq:sqrt-rule}, and substituting it into the objective yields Eq.~\eqref{eq:sqrt-rule-value}.
\end{proof}

We remark that Theorem~\ref{thm:fixedstate} should be interpreted as a structural result, not as the final arbitrary-target algorithm. In the actual estimation problem, the state $\rho$ is unknown, so one cannot deploy the state-dependent law $q_\rho^{\star}$ directly. What the theorem provides is the correct local geometry of the risk. Specifically, variance is governed by how an adversarial state distributes weight across the outcomes inside each setting, and those weights enter through quadratic averages rather than through a single largest coefficient. This is exactly why an outcome-wise $\ell_\infty$ surrogate can be loose. The theorem also clarifies the role of overcompleteness. Because the coefficients $\alpha_{u,b}$ are not unique, one can reshape the setting-wise second moments $\mu_u(\rho)$ without changing unbiasedness. The spectral minimax formulation in the next section performs that reshaping globally, simultaneously over all states, by replacing the unattainable family of state-dependent laws $q_\rho^{\star}$ with one state-independent design that is optimal in the exact minimax sense.

Theorem~\ref{thm:fixedstate} gives the main conceptual distinction between the exact problem and the OASIS surrogate. The exact state-dependent objective aggregates each setting through the square root of a \emph{state-weighted} second moment. By contrast, the OASIS linear program replaces this quantity by the much coarser setting-wise bound $\max_b|\alpha_{u,b}|$. The next proposition makes the relation precise.

\begin{proposition}[The OASIS linear program is a relaxation]
\label{prop:relax}
Define
\begin{equation}
L(\alpha)=\sum_{u\in\U}\max_{b\in B_u}|\alpha_{u,b}|.
\label{eq:lpobj}
\end{equation}
For every coefficient family $\alpha$ satisfying Eq.~\eqref{eq:unbiased}, there exists a sampling law $q_\alpha$ such that
\begin{equation}
\sup_{\rho\in\Sset(\Hs)}\Var_\rho(X)\le L(\alpha)^2.
\label{eq:relax-bound}
\end{equation}
Consequently,
\begin{equation}
\Gamma^{\star}(O):=\inf_{(q,\alpha):\,\mathrm{Eq.}\,\eqref{eq:unbiased}}\sup_{\rho\in\Sset(\Hs)}\Var_\rho(X)
\le \inf_{\alpha:\,\mathrm{Eq.}\,\eqref{eq:unbiased}}L(\alpha)^2.
\label{eq:relaxation}
\end{equation}
\end{proposition}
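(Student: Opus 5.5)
The plan is to choose the sampling law proportional to the setting-wise maximal coefficient magnitudes and then bound the second-moment operator in the operator order. Fix $\alpha$ satisfying Eq.~\eqref{eq:unbiased} and write $m_u=\max_{b\in B_u}|\alpha_{u,b}|$, so that $L(\alpha)=\sum_u m_u$. Since $O\neq 0$, Eq.~\eqref{eq:unbiased} forces some $\alpha_{u,b}\neq 0$, hence $L(\alpha)>0$. Define
\[
q_\alpha(u)=\frac{m_u}{L(\alpha)}.
\]
This law is admissible: if $q_\alpha(u)=0$ then $m_u=0$, so every $\alpha_{u,b}$ with that $u$ vanishes, which is exactly the convention required after Eq.~\eqref{eq:alpha}.

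Next I would bound $\M_{q_\alpha,\alpha}$ from Eq.~\eqref{eq:M}. For $u$ with $m_u>0$ and every $b$,
\[
\frac{\alpha_{u,b}^2}{q_\alpha(u)}=\frac{\alpha_{u,b}^2L(\alpha)}{m_u}\le m_uL(\alpha).
\]
Because each $P_{u,b}\succeq 0$ and $\sum_b P_{u,b}=I$, this gives
\[
\M_{q_\alpha,\alpha}\preceq \sum_u m_uL(\alpha)\sum_b P_{u,b}=L(\alpha)^2 I.
\]

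By Observation~\ref{obs:variance}, for every state $\rho$,
\[
\Var_\rho(X)=\tr(\rho\M_{q_\alpha,\alpha})-\tr(\rho O)^2\le \tr(\rho\M_{q_\alpha,\alpha})\le L(\alpha)^2\tr\rho=L(\alpha)^2.
\]
Taking the supremum over $\rho$ gives Eq.~\eqref{eq:relax-bound}.

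For Eq.~\eqref{eq:relaxation}, note that each pair $(q_\alpha,\alpha)$ is feasible for the infimum defining $\Gamma^\star(O)$. Hence $\Gamma^\star(O)\le L(\alpha)^2$ for every feasible $\alpha$, and taking the infimum over $\alpha$ yields the claim.

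The main point needing care is the support convention: settings with $m_u=0$ are omitted from $\M$ and also contribute nothing to $L$. Beyond that bookkeeping, the only substantive step is the operator inequality, which uses the completeness relation $\sum_b P_{u,b}=I$.
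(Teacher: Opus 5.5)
Your proof is correct and follows the paper's argument: the same law $q_\alpha(u)=m_u/L(\alpha)$ and the same bound $\alpha_{u,b}^2/q_\alpha(u)\le m_uL(\alpha)$ combined with $\sum_b P_{u,b}=I$. The differences are minor: you state the bound as $\M_{q_\alpha,\alpha}\preceq L(\alpha)^2 I$ rather than state by state, and you explicitly check $L(\alpha)>0$ (since $O\neq 0$), which the paper leaves implicit.
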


\begin{proof}
Set $m_u=\max_{b\in B_u}|\alpha_{u,b}|$, let $L=\sum_u m_u$, and choose
\begin{equation}
q_\alpha(u)=\begin{cases}
 m_u/L, & m_u>0,\\
 0, & m_u=0.
\end{cases}
\end{equation}
For any state $\rho$,
\begin{align}
\tr(\rho\M_{q_\alpha,\alpha})
&=\sum_u\sum_b\frac{\alpha_{u,b}^2}{q_\alpha(u)}\tr(\rho P_{u,b})\notag\\
&\le \sum_u\frac{m_u^2}{q_\alpha(u)}\sum_b\tr(\rho P_{u,b})
 =\sum_u Lm_u=L^2,
\label{eq:Lbound}
\end{align}
because $\sum_b\tr(\rho P_{u,b})=1$ for each $u$. Observation~\ref{obs:variance} then gives
$
\Var_\rho(X)=\tr(\rho\M_{q_\alpha,\alpha})-\tr(\rho O)^2\le L^2.
$
Taking the supremum over $\rho$ proves Eq.~\eqref{eq:relax-bound}, and taking the infimum over feasible $\alpha$ proves Eq.~\eqref{eq:relaxation}.
\end{proof}

The relaxation can be strict even when there is only one qubit and one measurement setting.

\begin{example}
\label{ex:gap}
Suppose the measurement family consists only of the computational-basis projectors $P_0=|0\rangle\langle0|$ and $P_1=|1\rangle\langle1|$, and the target is $O=|0\rangle\langle0|$. Then the exact worst-case one-shot variance equals $1/4$, whereas the OASIS surrogate value equals $1$.
\end{example}

\begin{proof}
There is only one setting, so $q=1$. Unbiasedness forces $\alpha_0P_0+\alpha_1P_1=O$, hence $\alpha_0=1$ and $\alpha_1=0$. Therefore the estimator is the Bernoulli random variable that returns $1$ with probability $f=\tr(\rho O)$ and $0$ otherwise. Its variance is $f-f^2$, maximized at $f=1/2$, so the exact worst-case variance is $1/4$. On the other hand, Eq.~\eqref{eq:lpobj} gives $L(\alpha)=\max\{|1|,|0|\}=1$, and the surrogate value is therefore $1$.
\end{proof}

Example~\ref{ex:gap} is small but informative. It shows that the exact objective cannot in general be recovered from a setting-wise maximum coefficient. A minimax treatment must optimize against physical states directly. The next section proves that the resulting worst-case variance has a clean spectral form and can still be solved exactly as a convex program.

\section{Exact spectral minimax formulation and SDP}
We now prove the central identity. For admissible $(q,\alpha)$, define the one-shot worst-case risk
\begin{equation}
\Gamma(q,\alpha;O)=\sup_{\rho\in\Sset(\Hs)}\Var_\rho(X).
\label{eq:gamma-fixed}
\end{equation}

\begin{theorem}[Exact spectral minimax identity]
\label{thm:spectral}
For every admissible pair $(q,\alpha)$,
\begin{equation}
\Gamma(q,\alpha;O)=
\min_{t\in\R}\Bigl\{\lammax\bigl(\M_{q,\alpha}-2tO\bigr)+t^2\Bigr\}.
\label{eq:spectral}
\end{equation}
Consequently,
\begin{equation}
\Gamma^{\star}(O)=
\inf_{(q,\alpha):\,\mathrm{Eq.}\,\eqref{eq:unbiased}}
\min_{t\in\R}\Bigl\{\lammax\bigl(\M_{q,\alpha}-2tO\bigr)+t^2\Bigr\}.
\label{eq:gamma-spectral}
\end{equation}
\end{theorem}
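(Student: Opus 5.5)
The plan is to start from Observation~\ref{obs:variance}, which writes the risk as
\[
\Gamma(q,\alpha;O)=\sup_{\rho\in\Sset(\Hs)}\bigl\{\tr(\rho\M_{q,\alpha})-\tr(\rho O)^2\bigr\}.
\]
The only nonlinear piece is the concave term $-f^2$, with $f=\tr(\rho O)$. We linearize it through the elementary variational identity
\[
-f^2=\min_{t\in\R}\{t^2-2tf\},
\]
whose minimum is attained at $t=f$. Substituting gives
\[
\Gamma=\sup_{\rho}\min_{t}\Phi(\rho,t),\qquad \Phi(\rho,t)=\tr\bigl(\rho(\M_{q,\alpha}-2tO)\bigr)+t^2.
\]
For fixed $t$ we have $\sup_\rho\tr(\rho A)=\lammax(A)$ over density operators, because the supremum is attained at a projector onto a top eigenvector. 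So the whole claim reduces to exchanging $\sup_\rho$ and $\min_t$. Once that is done, Eq.~\eqref{eq:gamma-spectral} follows immediately by taking the infimum over admissible $(q,\alpha)$ on both sides.

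The exchange is a minimax step, which I would justify with Sion's theorem. The conditions check out directly:
\begin{itemize}
\item $\Sset(\Hs)$ is convex and compact.
\item $\Phi$ is affine, hence concave and continuous, in $\rho$.
\item $\Phi$ is convex and continuous in $t$.
\end{itemize}
Compactness is needed only on one side, here the $\rho$ side. Sion then gives
\[
\sup_\rho\inf_t\Phi=\inf_t\sup_\rho\Phi=\inf_t\bigl\{\lammax(\M_{q,\alpha}-2tO)+t^2\bigr\}.
\]

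It remains to show the infimum over $t$ is attained, so that "min" is justified. The function $h(t)=\lammax(\M_{q,\alpha}-2tO)+t^2$ is continuous and convex, since $\lammax$ is convex and the argument is affine in $t$. It is also coercive. Using $0\preceq O\preceq I$, we get
\[
\lammax(\M_{q,\alpha}-2tO)\ge \lammax(\M_{q,\alpha})-2|t|\ \text{ (from } -2tO\succeq -2|t|I\text{)},\qquad \lammax(\M_{q,\alpha}-2tO)\ge -\|\M_{q,\alpha}\|.
\]
Either bound shows $h(t)\ge t^2-2|t|-\text{const}\to\infty$, so a minimizer exists.

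As an alternative to invoking Sion, there is a self-contained route.
\begin{itemize}
\item The direction $\le$ is trivial (weak duality): for every $\rho$ and $t$, $\tr(\rho\M)-f^2\le\Phi(\rho,t)\le h(t)$.
\item For $\ge$, take a minimizer $t^\star$ of $h$ and use the subdifferential of $\lammax$. The optimality condition $0\in\partial h(t^\star)$ yields a state $\rho^\star$ in the convex hull of top-eigenvector projectors of $\M-2t^\star O$ with $2t^\star-2\tr(\rho^\star O)=0$. This state satisfies $f=t^\star$ and $\tr(\rho^\star(\M-2t^\star O))=\lammax$, so $\Var_{\rho^\star}=h(t^\star)$.
\end{itemize}
I expect this subgradient/attainment detail to be the only delicate step; citing Sion's theorem sidesteps it cleanly.
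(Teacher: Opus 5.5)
Your proposal is correct and follows essentially the same route as the paper. Both arguments linearize $-\tr(\rho O)^2$ through the auxiliary variable $t$, exchange $\sup_\rho$ and $\min_t$ by Sion's theorem, and evaluate the inner supremum as $\lammax(\M_{q,\alpha}-2tO)+t^2$. The only difference is in handling the unbounded $t$: the paper restricts $t$ to $[0,1]$ so that both sets are compact, then shows by explicit comparison that $t<0$ or $t>1$ cannot improve on $t=0$ or $t=1$. You instead apply Sion with compactness only on the state side, which is valid after applying it to $-\Phi$, and get attainment of the minimum from convexity and coercivity of $h$.
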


\begin{proof}
Fix $(q,\alpha)$ and abbreviate $\M=\M_{q,\alpha}$. For $t\in\R$, define
\begin{equation}
\Phi(\rho,t)=\tr(\rho\M)-2t\tr(\rho O)+t^2.
\label{eq:phi}
\end{equation}
Since $O$ is a projector, $0\le \tr(\rho O)\le 1$ for every state $\rho$, and therefore
\begin{equation}
\inf_{t\in\R}\Phi(\rho,t)=\inf_{t\in[0,1]}\Phi(\rho,t)=\tr(\rho\M)-\tr(\rho O)^2.
\label{eq:phi-min}
\end{equation}
By Observation~\ref{obs:variance}, the left-hand side of Eq.~\eqref{eq:spectral} is thus $\sup_\rho\inf_{t\in[0,1]}\Phi(\rho,t)$. The state set $\Sset(\Hs)$ is compact and convex, the interval $[0,1]$ is compact and convex, and $\Phi$ is continuous, affine in $\rho$, and convex in $t$. Sion's minimax theorem \cite{Sion1958general} therefore yields
\begin{equation}
\sup_{\rho\in\Sset(\Hs)}\inf_{t\in[0,1]}\Phi(\rho,t)
=\inf_{t\in[0,1]}\sup_{\rho\in\Sset(\Hs)}\Phi(\rho,t).
\label{eq:sion}
\end{equation}
For fixed $t$, maximizing the affine functional $\rho\mapsto\tr(\rho(\M-2tO))$ over density operators gives the largest eigenvalue, so
\begin{equation}
\sup_{\rho\in\Sset(\Hs)}\Phi(\rho,t)=\lammax(\M-2tO)+t^2.
\label{eq:lambda}
\end{equation}
Combining Eqs.~\eqref{eq:phi-min}, \eqref{eq:sion}, and \eqref{eq:lambda} shows that the minimum over $t\in[0,1]$ equals the worst-case variance.

It remains to justify that the minimum over $\R$ is attained in $[0,1]$. If $t<0$, then $\M-2tO=\M+2|t|O\succeq \M$, so
\begin{equation}
\lammax(\M-2tO)+t^2\ge \lammax(\M),
\end{equation}
which is the value at $t=0$. If $t>1$, then $O\preceq I$ implies $-2(t-1)O\succeq -2(t-1)I$, hence
\begin{align}
\M-2tO
&=(\M-2O)-2(t-1)O\notag\\
&\succeq (\M-2O)-2(t-1)I.
\end{align}
Therefore
\begin{align}
\lammax(\M-2tO)+t^2
&\ge \lammax(\M-2O)-2(t-1)+t^2\notag\\
&=\lammax(\M-2O)+1+(t-1)^2,
\end{align}
which is at least the value at $t=1$. Thus minimizing over $\R$ is equivalent to minimizing over $[0,1]$, proving Eq.~\eqref{eq:spectral}. Finally, taking the infimum over all admissible $(q,\alpha)$ yields Eq.~\eqref{eq:gamma-spectral}.
\end{proof}

Theorem~\ref{thm:spectral} is the exact arbitrary-target replacement for the OASIS linear program. It says that the entire fidelity-estimation design problem reduces to choosing a second-moment operator $\M_{q,\alpha}$ and then minimizing the rank-one-perturbed spectral quantity $\lammax(\M_{q,\alpha}-2tO)+t^2$.

We now convert the spectral formulation into an exact SDP. The standard perspective variable $y_{u,b}$ represents the ratio $\alpha_{u,b}^2/q(u)$, while a scalar slack variable $\gamma$ upper-bounds the largest eigenvalue term.

\begin{theorem}[Exact SDP]
\label{thm:sdp}
The value $\Gamma^{\star}(O)$ equals the optimum of
\begin{equation}
\begin{aligned}
\min_{\alpha,q,y,t,\gamma,s}\quad & \gamma+s\\
\mathrm{s.t.}\quad & \sum_{u,b}\alpha_{u,b}P_{u,b}=O,\\
& q(u)\ge 0\quad (u\in\U),\qquad \sum_{u\in\U}q(u)=1,\\
& \begin{bmatrix}y_{u,b} & \alpha_{u,b}\\ \alpha_{u,b} & q(u)\end{bmatrix}\succeq 0
\quad (u\in\U,\ b\in B_u),\\
& \sum_{u,b}y_{u,b}P_{u,b}-2tO\preceq \gamma I,\\
& \begin{bmatrix}s & t\\ t & 1\end{bmatrix}\succeq 0.
\end{aligned}
\label{eq:sdp}
\end{equation}
\end{theorem}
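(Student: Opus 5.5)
We reduce to Theorem~\ref{thm:spectral}, which gives
\[
\Gamma^{\star}(O)=\inf_{(q,\alpha)}\min_t\bigl\{\lammax(\M_{q,\alpha}-2tO)+t^2\bigr\},
\]
and then show two inequalities between this value and the SDP optimum, which we denote by $V$. The linear-algebra tools are Schur complements. For $q(u)>0$, the $2\times2$ block $\begin{bmatrix}y & \alpha\\ \alpha & q\end{bmatrix}\succeq 0$ holds iff $y\ge \alpha^2/q$. For $q(u)=0$, it holds iff $\alpha=0$ and $y\ge 0$. Likewise $\begin{bmatrix}s&t\\t&1\end{bmatrix}\succeq0$ iff $s\ge t^2$, and $X\preceq\gamma I$ iff $\lammax(X)\le\gamma$.

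\emph{Step 1 ($V\le\Gamma^\star$).} Take any admissible $(q,\alpha)$ and any $t$. Set $y_{u,b}=\alpha_{u,b}^2/q(u)$ when $q(u)>0$ and $y_{u,b}=0$ otherwise, so that $\sum y_{u,b}P_{u,b}=\M_{q,\alpha}$. Set $\gamma=\lammax(\M_{q,\alpha}-2tO)$ and $s=t^2$. All constraints hold: the zero-$q$ blocks are fine because admissibility forces $\alpha_{u,b}=0$ there. The objective value is $\lammax(\M_{q,\alpha}-2tO)+t^2$. Taking the infimum over $(q,\alpha,t)$ gives $V\le\Gamma^\star$.

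\emph{Step 2 ($V\ge\Gamma^\star$).} Take any SDP-feasible point. By the Schur-complement facts, $q(u)=0$ forces $\alpha_{u,b}=0$, so $(q,\alpha)$ is admissible and satisfies Eq.~\eqref{eq:unbiased}. Moreover $y_{u,b}\ge\alpha_{u,b}^2/q(u)$ on the support, and $y_{u,b}\ge 0$ off it. Since each $P_{u,b}\succeq0$, this gives
\[
\sum y_{u,b}P_{u,b}\succeq \M_{q,\alpha}.
\]
It follows that $\M_{q,\alpha}-2tO\preceq\gamma I$, so $\gamma\ge\lammax(\M_{q,\alpha}-2tO)$. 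Combined with $s\ge t^2$, the objective satisfies $\gamma+s\ge \lammax(\M_{q,\alpha}-2tO)+t^2\ge\Gamma^\star$. Taking the infimum over feasible points gives $V\ge\Gamma^\star$.

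The main obstacle is the careful handling of the degenerate case $q(u)=0$ in the perspective lift. There the $2\times 2$ PSD block must encode both that $\alpha_{u,b}=0$ and that the omitted terms of $\M_{q,\alpha}$ contribute nothing. This is exactly the standard closure of the perspective function, and we verify it directly from the $2\times2$ minors. A secondary point concerns "optimum" versus infimum. The two steps establish equality of infima regardless of attainment, which suffices for the statement. If attainment is wanted, we would argue it from compactness: $q$ lies in the simplex, $t$ may be restricted to $[0,1]$ by Theorem~\ref{thm:spectral}, and bounded objective sublevel sets bound $\gamma$, $s$ and $y$. The remaining worry is that $\alpha$ is not obviously bounded. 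On the support of $q$ it is controlled by $\alpha_{u,b}^2\le y_{u,b}q(u)$, and we would check that this bound, together with the bound on $y$, keeps the feasible sublevel set compact.
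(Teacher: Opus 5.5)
Your proposal is correct and follows the paper's proof essentially step for step. It uses the same explicit lift $y_{u,b}=\alpha_{u,b}^2/q(u)$, $\gamma=\lammax(\M_{q,\alpha}-2tO)$, $s=t^2$ in one direction, and the same $2\times2$ Schur-complement domination $\M_{q,\alpha}\preceq\sum_{u,b}y_{u,b}P_{u,b}$ in the other; you are slightly more careful than the paper in noting that this domination needs $y_{u,b}\ge 0$ off the support of $q$.

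The paper only matches infima, and your optional attainment concern closes quickly. The blocks give $\alpha_{u,b}^2\le y_{u,b}q(u)\le y_{u,b}$ for every $(u,b)$. On a sublevel set $\gamma+s\le c$ one has $0\le\gamma\le c$ (since $\gamma I\succeq-2tO$ and $O$ has a kernel) and $t^2\le s\le c$. The trace of the LMI then bounds $\sum_{u,b}y_{u,b}$, because each local-Pauli $P_{u,b}$ has unit trace, so the closed sublevel set is compact.
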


\begin{proof}
Take any admissible pair $(q,\alpha)$ and any $t\in\R$. Set $y_{u,b}=\alpha_{u,b}^2/q(u)$ when $q(u)>0$ and $y_{u,b}=0$ when $q(u)=0$, which is consistent because admissibility forces $\alpha_{u,b}=0$ whenever $q(u)=0$. Also set
\begin{equation}
\gamma=\lammax\!\left(\sum_{u,b}y_{u,b}P_{u,b}-2tO\right),
\qquad s=t^2.
\end{equation}
Then the unbiasedness identity and the probability constraints hold by construction. The block matrix
\begin{equation}
\begin{bmatrix}y_{u,b} & \alpha_{u,b}\\ \alpha_{u,b} & q(u)\end{bmatrix}
\end{equation}
is positive semidefinite because its determinant is zero when $q(u)>0$, and when $q(u)=0$ both entries in the off-diagonal vanish. The inequality
$
\sum_{u,b}y_{u,b}P_{u,b}-2tO\preceq \gamma I
$
holds by the definition of $\gamma$, while the final $2\times2$ block is positive semidefinite because $s=t^2$. Hence every spectral-feasible point yields an SDP-feasible point with the same objective value $\gamma+s=\lammax(\M_{q,\alpha}-2tO)+t^2$.

Conversely, take any SDP-feasible point. Positive semidefiniteness of each block matrix implies $y_{u,b}\ge \alpha_{u,b}^2/q(u)$ whenever $q(u)>0$, and forces $\alpha_{u,b}=0$ when $q(u)=0$. Therefore
\begin{equation}
\M_{q,\alpha}=\sum_{u,b}\frac{\alpha_{u,b}^2}{q(u)}P_{u,b}
\preceq \sum_{u,b}y_{u,b}P_{u,b}.
\label{eq:dominateM}
\end{equation}
Subtracting $2tO$ preserves semidefinite order, so
\begin{equation}
\lammax(\M_{q,\alpha}-2tO)
\le \lammax\!\left(\sum_{u,b}y_{u,b}P_{u,b}-2tO\right)
\le \gamma.
\label{eq:gamma-upper}
\end{equation}
Similarly, the final $2\times2$ block implies $s\ge t^2$. Hence every SDP-feasible point satisfies
$
\gamma+s\ge \lammax(\M_{q,\alpha}-2tO)+t^2,
$
which upper-bounds the exact spectral objective in Theorem~\ref{thm:spectral}. The two directions match, so Eq.~\eqref{eq:sdp} is exact.
\end{proof}

Algorithm~\ref{alg:spectral} states the resulting procedure. The offline stage solves the exact SDP, and the online stage performs the single-copy randomized measurements. In particular, the improved estimator does not require a different laboratory primitive. It only changes the target-dependent optimization that is performed before data collection. Any improvement will arise from a better choice of $q$ and $\alpha$, not from stronger measurements, additional adaptivity, or a change in the estimator class itself.

\begin{algorithm}
\caption{Spectral Minimax Direct Fidelity Estimation}
\label{alg:spectral}
\begin{algorithmic}[1]
\REQUIRE Target projector $O$, local-Pauli effects $\{P_{u,b}\}$ from Eq.~\eqref{eq:pauli-povm}, shot budget $N$
\STATE Solve the exact SDP in Eq.~\eqref{eq:sdp} offline and obtain an optimal law $q^{\star}$ and coefficients $\alpha^{\star}_{u,b}$.
\FOR{$j=1$ to $N$}
\STATE Sample $u_j\sim q^{\star}$.
\STATE Measure one copy of $\rho$ with $\{P_{u_j,b}\}_{b\in B_{u_j}}$ and get $b_j$.
\STATE Set $X_j=\alpha^{\star}_{u_j,b_j}/q^{\star}(u_j)$.
\ENDFOR
\STATE Return $\widehat F_N=\frac{1}{N}\sum_{j=1}^{N}X_j$.
\end{algorithmic}
\end{algorithm}

\begin{corollary}
\label{cor:valid}
For every $\rho\in\Sset(\Hs)$, Algorithm~\ref{alg:spectral} is unbiased and satisfies
\begin{equation}
\E_\rho[\widehat F_N]=F(\rho),
\quad
\sup_{\rho\in\Sset(\Hs)}\E_\rho\!\left[(\widehat F_N-F(\rho))^2\right]
\le \frac{\Gamma^{\star}(O)}{N}.
\label{eq:msebound}
\end{equation}
\end{corollary}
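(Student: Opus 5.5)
The argument rests on one fact: the shots $X_1,\dots,X_N$ in Algorithm~\ref{alg:spectral} are i.i.d. copies of the one-shot estimator $X$ defined by the pair $(q^{\star},\alpha^{\star})$. Each shot independently samples $u_j\sim q^{\star}$ and measures a fresh copy of $\rho$, and the output rule $X_j=\alpha^{\star}_{u_j,b_j}/q^{\star}(u_j)$ is exactly the one-shot rule $g_u(b)=\alpha_{u,b}/q(u)$. The division is well defined because $u_j$ is drawn only where $q^{\star}(u_j)>0$.

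Unbiasedness follows from feasibility. The optimal point satisfies the first constraint of Eq.~\eqref{eq:sdp}, which is Eq.~\eqref{eq:unbiased}. The SDP block constraints force $\alpha^{\star}_{u,b}=0$ whenever $q^{\star}(u)=0$, so $(q^{\star},\alpha^{\star})$ is admissible. Observation~\ref{obs:unbiased} then gives $\E_\rho[X_j]=F(\rho)$ for every $j$, and linearity gives $\E_\rho[\widehat F_N]=F(\rho)$.

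For the mean-squared error, unbiasedness makes the MSE equal to $\Var_\rho(\widehat F_N)$. Independence of the shots then gives
\begin{equation*}
\Var_\rho(\widehat F_N)=\frac{1}{N}\Var_\rho(X),
\end{equation*}
and by Observation~\ref{obs:variance}, $\Var_\rho(X)=\tr(\rho\M_{q^{\star},\alpha^{\star}})-F(\rho)^2$. Taking the supremum over $\rho$ yields the bound $\Gamma(q^{\star},\alpha^{\star};O)/N$. It remains to show $\Gamma(q^{\star},\alpha^{\star};O)\le\Gamma^{\star}(O)$. Take an optimal SDP point $(\alpha^{\star},q^{\star},y,t,\gamma,s)$. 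The converse direction in the proof of Theorem~\ref{thm:sdp} gives
\begin{equation*}
\gamma+s\ge\lammax(\M_{q^{\star},\alpha^{\star}}-2tO)+t^2\ge\Gamma(q^{\star},\alpha^{\star};O),
\end{equation*}
where the last inequality is Theorem~\ref{thm:spectral}. Since $\gamma+s$ is the optimum, Theorem~\ref{thm:sdp} identifies it with $\Gamma^{\star}(O)$, which closes the chain.

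The main obstacle is that this argument assumes the SDP optimum is attained, since Algorithm~\ref{alg:spectral} presumes an optimal law $q^{\star}$ exists. I would justify attainment by compactness. The variables $q$ lie in the simplex. An optimal $t$ can be restricted to $[0,1]$ as in Theorem~\ref{thm:spectral}, and then $s$ can be restricted to $[0,1]$ as well. At an optimum one may take $y_{u,b}=\alpha_{u,b}^2/q(u)$, so $y$ need not be tracked separately. For $\alpha$ and $y$, bounded objective value bounds $\lammax(\sum y_{u,b}P_{u,b})$. Because $\sum_b P_{u,b}=I$, this bounds every $y_{u,b}$, and then $\alpha_{u,b}^2\le y_{u,b}q(u)$ bounds $\alpha$. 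The feasible sublevel set is therefore compact and the minimum is attained. If one prefers to avoid attainment, an $\varepsilon$-optimal solution gives the bound $(\Gamma^{\star}(O)+\varepsilon)/N$, and letting $\varepsilon\to0$ recovers Eq.~\eqref{eq:msebound}.
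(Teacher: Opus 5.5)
Your proposal is correct and follows the same route as the paper. Unbiasedness comes from the SDP's equality constraint via Observation~\ref{obs:unbiased}. The one-shot worst-case variance is bounded by $\Gamma^{\star}(O)$ through the converse direction of Theorem~\ref{thm:sdp} combined with Theorem~\ref{thm:spectral}. Independence of the shots supplies the factor $1/N$.

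Your compactness argument for attainment fills in something the paper takes for granted, namely that an optimal $(q^{\star},\alpha^{\star})$ exists. Two small imprecisions there, neither affecting the result:
\begin{itemize}
\item The per-entry bound on $y_{u,b}$ comes from $y_{u,b}\ge 0$, $y_{u,b}P_{u,b}\preceq\sum_{v,c}y_{v,c}P_{v,c}$ and $P_{u,b}\neq 0$ (true for rank-one local-Pauli effects), not from $\sum_b P_{u,b}=I$.
\item The $\varepsilon\to 0$ fallback would not give the exact bound for any single estimator, since each $\varepsilon$ yields a different one; it is unnecessary once attainment is established.
\end{itemize}
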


\begin{proof}
Each shot uses the same unbiased coefficient family from Observation~\ref{obs:unbiased}, so the average is unbiased. By Theorems~\ref{thm:spectral} and \ref{thm:sdp}, the one-shot variance is at most $\Gamma^{\star}(O)$. Independence of the $N$ shots divides the variance of the sample mean by $N$.
\end{proof}

\begin{proposition}[Operational meaning of the exact SDP output]
\label{prop:implementation}
Let $(\alpha^{\star},q^{\star},y^{\star},t^{\star},\gamma^{\star},s^{\star})$ be any optimal solution of Eq.~\eqref{eq:sdp}. Then the estimator executed during data collection depends only on the pair $(q^{\star},\alpha^{\star})$: on a sampled outcome $(u,b)$ it outputs the scalar $\alpha^{\star}_{u,b}/q^{\star}(u)$. Moreover, if $q^{\star}(u)=0$, then $\alpha^{\star}_{u,b}=0$ for every $b\in B_u$, so unsampled settings never require numerical outputs.
\end{proposition}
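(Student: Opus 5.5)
The argument has two parts: identify what the online stage actually uses, and then extract the vanishing property from the $2\times 2$ perspective blocks of Eq.~\eqref{eq:sdp}.

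\emph{First claim.} I would read off Algorithm~\ref{alg:spectral} directly. Line~3 samples $u_j\sim q^{\star}$. Line~4 measures $\{P_{u_j,b}\}_{b\in B_{u_j}}$, which is fixed by the measurement family and does not depend on the SDP solution. Line~5 outputs $X_j=\alpha^{\star}_{u_j,b_j}/q^{\star}(u_j)$. Neither $y^{\star}$, $t^{\star}$, $\gamma^{\star}$ nor $s^{\star}$ appears in any of these steps. They act only as certificates for the objective value, in the sense of the converse direction in the proof of Theorem~\ref{thm:sdp}. So the executed estimator is a function of $(q^{\star},\alpha^{\star})$ alone.

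I would also note that this pair is admissible. The first constraint of Eq.~\eqref{eq:sdp} is Eq.~\eqref{eq:unbiased}, so Observation~\ref{obs:unbiased} applies. Moreover, the value of this pair is at most $\gamma^{\star}+s^{\star}=\Gamma^{\star}(O)$, by Eq.~\eqref{eq:gamma-upper} together with Theorem~\ref{thm:spectral}.

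\emph{Second claim.} Fix $u$ with $q^{\star}(u)=0$ and any $b\in B_u$. The constraint
\[
\begin{bmatrix}y^{\star}_{u,b} & \alpha^{\star}_{u,b}\\ \alpha^{\star}_{u,b} & 0\end{bmatrix}\succeq 0
\]
implies that every principal minor is nonnegative. In particular the determinant gives $-(\alpha^{\star}_{u,b})^2\ge 0$, hence $\alpha^{\star}_{u,b}=0$. Alternatively, one can test the matrix against the vector $(\epsilon,-1)$ and obtain $\epsilon^2 y^{\star}_{u,b}-2\epsilon\alpha^{\star}_{u,b}\ge 0$ for all $\epsilon$. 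Letting $\epsilon\to 0^{\pm}$ forces $\alpha^{\star}_{u,b}=0$.

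It follows that settings with $q^{\star}(u)=0$ are sampled with probability zero. The ratio $\alpha^{\star}_{u,b}/q^{\star}(u)$ therefore never has to be evaluated for them, and their coefficients contribute nothing to Eq.~\eqref{eq:unbiased}, consistent with the convention below Eq.~\eqref{eq:alpha}.

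\emph{Main obstacle.} There is no real difficulty here. The only delicate point is the degenerate case $q^{\star}(u)=0$, where the output ratio is formally undefined. The block-matrix argument above handles exactly this case by showing that such settings carry zero weight and zero sampling probability.
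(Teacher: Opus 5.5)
Your proposal is correct and follows essentially the same route as the paper. Both read the first claim directly off Algorithm~\ref{alg:spectral}. Both get the second claim from positive semidefiniteness of the $2\times2$ perspective block with $q^{\star}(u)=0$, which forces $\alpha^{\star}_{u,b}=0$; your determinant computation $-(\alpha^{\star}_{u,b})^2\ge 0$ simply makes explicit the step the paper states in words.
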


\begin{proof}
The online formula in Algorithm~\ref{alg:spectral} already uses only $q^{\star}$ and $\alpha^{\star}$. It remains to justify the zero-probability case. If $q^{\star}(u)=0$, positive semidefiniteness of the $2\times2$ block
\[
\begin{bmatrix}y^{\star}_{u,b} & \alpha^{\star}_{u,b}\\ \alpha^{\star}_{u,b} & q^{\star}(u)\end{bmatrix}
\succeq 0
\]
forces its off-diagonal entry to vanish, hence $\alpha^{\star}_{u,b}=0$. Therefore every sampled setting has a well-defined output coefficient and every unsampled setting carries zero estimator weight. The remaining variables $y^{\star}$, $t^{\star}$, $\gamma^{\star}$, and $s^{\star}$ certify exact optimality of the offline design through the linear matrix inequalities in Eq.~\eqref{eq:sdp}, but they are not needed at measurement time.
\end{proof}

Proposition~\ref{prop:implementation} is the precise implementation-level meaning of our improved method. In Section~\ref{sec:experiments}, the numerical comparison refers to the estimator obtained by solving the exact SDP in Theorem~\ref{thm:sdp} and then deploying the optimized pair $(q^{\star},\alpha^{\star})$ exactly as written in Algorithm~\ref{alg:spectral}.

\begin{corollary}[Additive-error sample complexity]
\label{cor:chebyshev}
For any $\varepsilon>0$ and $\delta\in(0,1)$, if
\begin{equation}
N\ge \frac{\Gamma^{\star}(O)}{\delta\,\varepsilon^2},
\label{eq:samplecomplexity}
\end{equation}
then Algorithm~\ref{alg:spectral} satisfies the uniform bound
\begin{equation}
\Pr_\rho\!\left\{|\widehat F_N-F(\rho)|\ge \varepsilon\right\}\le \delta
\qquad \text{for all }\rho\in\Sset(\Hs).
\label{eq:chebyshev}
\end{equation}
\end{corollary}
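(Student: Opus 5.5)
The approach is to apply Chebyshev's inequality to the sample mean $\widehat F_N$, using Corollary~\ref{cor:valid} as the only substantive input. Fix an arbitrary state $\rho\in\Sset(\Hs)$. Corollary~\ref{cor:valid} gives $\E_\rho[\widehat F_N]=F(\rho)$. It also bounds the mean-squared error uniformly:
\[
\E_\rho\!\left[(\widehat F_N-F(\rho))^2\right]\le \Gamma^{\star}(O)/N .
\]
Since $\widehat F_N$ is unbiased, this mean-squared error is exactly $\Var_\rho(\widehat F_N)$.

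Next, apply Chebyshev's inequality (equivalently, Markov's inequality applied to the nonnegative variable $(\widehat F_N-F(\rho))^2$). This gives
\[
\Pr_\rho\{|\widehat F_N-F(\rho)|\ge\varepsilon\}\le \frac{\E_\rho[(\widehat F_N-F(\rho))^2]}{\varepsilon^2}\le \frac{\Gamma^{\star}(O)}{N\varepsilon^2}.
\]
Substituting the hypothesis $N\ge \Gamma^{\star}(O)/(\delta\varepsilon^2)$ gives $\Gamma^{\star}(O)/(N\varepsilon^2)\le\delta$. Because $\rho$ was arbitrary and the constant $\Gamma^{\star}(O)$ does not depend on $\rho$, the bound in Eq.~\eqref{eq:chebyshev} holds uniformly over $\Sset(\Hs)$.

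The only point needing care is that Corollary~\ref{cor:valid} really delivers a one-shot variance of at most $\Gamma^{\star}(O)$. This requires the SDP optimum in Eq.~\eqref{eq:sdp} to be attained, so that Algorithm~\ref{alg:spectral} can use an optimal pair $(q^\star,\alpha^\star)$. I would take this attainment as given, as the algorithm presupposes it. If it were not attained, the same argument would go through with $\Gamma^{\star}(O)$ replaced by the worst-case risk $\Gamma(q,\alpha;O)$ of whatever near-optimal pair is actually deployed.

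The degenerate case $\Gamma^{\star}(O)=0$ causes no trouble. The bound still holds for every $N\ge 1$, because the deviation probability is then $0$.
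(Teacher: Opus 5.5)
Your proposal is correct and matches the paper's proof: take the uniform bound $\Var_\rho(\widehat F_N)\le \Gamma^{\star}(O)/N$ from Corollary~\ref{cor:valid}, apply Chebyshev's inequality using unbiasedness, and use the hypothesis on $N$ to bound the result by $\delta$. Your extra caveats are harmless but not needed: the SDP optimum is attained because its sublevel sets are compact, and every unbiased estimator has $\Gamma^{\star}(O)\ge 1/4$, so the degenerate case never arises.
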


\begin{proof}
By Corollary~\ref{cor:valid},
\[
\sup_{\rho\in\Sset(\Hs)}\Var_\rho(\widehat F_N)\le \Gamma^{\star}(O)/N.
\]
Since $\widehat F_N$ is unbiased, Chebyshev's inequality gives
\[
\Pr_\rho\!\left\{|\widehat F_N-F(\rho)|\ge \varepsilon\right\}
\le \frac{\Var_\rho(\widehat F_N)}{\varepsilon^2}
\le \frac{\Gamma^{\star}(O)}{N\varepsilon^2}.
\]
The right-hand side is at most $\delta$ under Eq.~\eqref{eq:samplecomplexity}.
\end{proof}

Corollary~\ref{cor:chebyshev} shows that the exact spectral objective controls not only the worst-case $1/N$ MSE constant in Eq.~\eqref{eq:msebound} but also a nonasymptotic additive-error guarantee. Therefore, when the matched-budget experiments show a smaller MSE for Algorithm~\ref{alg:spectral}, they are probing a quantity that also governs a uniform concentration bound for the same optimized estimator.

\begin{remark}[Size of the exact local-Pauli SDP]
\label{remark:size}
For arbitrary-target local-Pauli measurements on $n$ qubits, the SDP in Eq.~\eqref{eq:sdp} contains $3^n$ sampling variables $q(u)$, $6^n$ coefficient variables $\alpha_{u,b}$, $6^n$ lifted variables $y_{u,b}$, $6^n+1$ small $2\times2$ positive-semidefinite constraints, and one Hermitian linear matrix inequality acting on a $2^n$-dimensional Hilbert space. In particular, the exact offline design problem for arbitrary targets is exponential in $n$.
\end{remark}

Remark~\ref{remark:size} explains the numerical regime of our algorithm. The arbitrary-target spectral formulation is exact, but it is not a scalable method for generic targets. Rather, it provides the correct offline benchmark within the measurement model, and for small $n$ it can be solved exactly and then reused across arbitrarily many experimental repetitions on the same target.

The improved method can still be operationally simple once the offline solve is completed. Although the optimization variables scale with the number of setting-outcome pairs, the deployed estimator consists only of a sampler for $q^{\star}$ and a lookup table for the outputs $\alpha^{\star}_{u,b}/q^{\star}(u)$. Thus the per-shot laboratory routine remains straightforward.

\section{Experimental results}
\label{sec:experiments}
For each $n\in\{3,4,5,6\}$, we draw a Haar-random pure target state \cite{haar1933massbegriff}, construct its projector $O$, and compare the OASIS estimator of \cite{cha2026operator} with the exact-SDP implementation of Algorithm~\ref{alg:spectral}. The measurement family is the same local Pauli family for both methods, so the comparison isolates the effect of replacing the OASIS surrogate objective by the exact spectral minimax objective.

The unknown state is the depolarized model
$
\rho=0.9\,O+0.1\,I/d.
$
Since $O$ is a rank-one projector, the corresponding exact target fidelity is
$
F(\rho)=\tr(\rho O)=0.9+0.1/d.
$
Hence no auxiliary tomography is needed to evaluate the estimation error. Specifically, both methods are evaluated against the same scalar ground truth $0.9+0.1/2^n$ and under the same total shot budget.

Because the arbitrary-target SDP assigns variables to fine-grained local-Pauli outcomes, its offline dimension still grows exponentially with $n$. The present numerical study is therefore limited to $n=3,4,5,6$, where exact offline optimization remains computationally manageable.

\begin{table}[h]
\captionsetup{justification=justified, singlelinecheck=false}
\caption{MSE comparison for Haar-random targets (in units of $10^{-4}$). Each entry is the average of $1000$ independent experiments with $\rho=0.9O+0.1I/d$. The number of shots was matched to that used by the grouping-based method of \cite{barbera2025sampling} with $\epsilon=\delta=0.1$.}
\label{tab:mse}
\centering
\footnotesize
\begin{tabular}{c>{\centering\arraybackslash}p{0.35\linewidth}>{\centering\arraybackslash}p{0.2\linewidth}>{\centering\arraybackslash}p{0.2\linewidth}}
\toprule
$n$ & \# Shots (matched to \cite{barbera2025sampling}) & OASIS \cite{cha2026operator} & Algorithm~\ref{alg:spectral} \\
\midrule
3 & 4426 & 3.67 & 3.49 \\
4 & 8127 & 2.40 & 2.13 \\
5 & 14083 & 2.22 & 1.52 \\
6 & 27399 & 1.49 & 1.01 \\
\bottomrule
\end{tabular}
\end{table}

For each configuration, the empirical mean-squared error is averaged over $1000$ independent experiments. Table~\ref{tab:mse} summarizes the MSE values.

To keep the comparison fair, both methods used the same total shot budget for a given $n$. Because the improved method changes only the offline optimization, this matched-budget protocol attributes any observed MSE reduction to the improved estimator design rather than to extra data. The relative MSE advantage of Algorithm~\ref{alg:spectral} becomes larger as $n$ increases.

A practical point is worth stressing. The exact spectral formulation is an offline design problem. Once $q^{\star}$ and $\alpha^{\star}_{u,b}$ are computed, the online estimator is operationally as simple as the OASIS estimator. Each shot samples a local Pauli setting, records a bit string, and outputs a scalar lookup value $\alpha^{\star}_{u,b}/q^{\star}(u)$. Consequently, the experimental burden remains in the same regime as OASIS.

\section{Conclusion}
For arbitrary target states, the exact adversary is a quantum state, not a worst isolated outcome inside each measurement setting. Once that distinction is enforced, the design problem becomes the exact minimax optimization of one-shot variance over the same non-adaptive single-copy estimator class used in \cite{cha2026operator}. In this work we presented a derivation of that exact replacement.

Algorithm~\ref{alg:spectral} is mathematically exact, solving the true worst-case variance instead of a surrogate. By maintaining the OASIS protocol's experimental structure, it proves that performance gains arise from an improved offline objective rather than superior measurement primitives.

An important direction for future work is to preserve the exact minimax viewpoint while reducing the offline complexity of the design problem. In particular, identifying target families with symmetry, tensor-network descriptions, or other compressed representations could allow the exact spectral SDP to admit smaller formulations or analytically simplified solutions.

\section*{Acknowledgment}
This work is in part supported by the National Research Foundation of Korea (NRF, RS-2024-00451435 (20\%), RS-2024-00413957 (20\%)), Institute of Information \& communications Technology Planning \& Evaluation (IITP, RS-2025-02305453 (15\%), RS-2025-02273157 (15\%), RS-2025-25442149 (15\%), RS-2021-II211343 (15\%)) grant funded by the Ministry of Science and ICT (MSIT), Institute of New Media and Communications (INMAC), and the BK21 FOUR program of the Education, Artificial Intelligence Graduate School Program (Seoul National University), and Research Program for Future ICT Pioneers, Seoul National University in 2026.

\bibliographystyle{IEEEtran}
\bibliography{references}

\end{document}